\documentclass[pra,aps,floatfix,superscriptaddress,notitlepage,twocolumn,a4paper]{revtex4-1}
\usepackage[utf8]{inputenc}

\usepackage{fullpage} 
\usepackage{tikz} 
\usepackage{amsmath,amsfonts,amsthm, mathtools, mathrsfs}
\usepackage{float}
\usepackage{dsfont}
\usepackage{csquotes}
\usepackage{amssymb}
\usepackage{verbatim}
\usepackage{enumerate}
\usepackage{centernot}
\usepackage{subfigure}

\usepackage{hyperref}
\definecolor{darkred}  {rgb}{0.5,0,0}
\definecolor{darkblue} {rgb}{0,0,0.5}
\definecolor{darkgreen}{rgb}{0,0.5,0}
\hypersetup{
	colorlinks = true,
	urlcolor  = blue,         
	linkcolor = darkblue,     
	citecolor = darkgreen,    
	filecolor = darkred       
}

\usepackage{graphicx}
\usepackage[leftcaption]{sidecap}

\theoremstyle{definition}

\newtheorem{lemma}{Lemma}

\newcommand{\N}{\mathcal{N}}

\newcommand{\fid}{\mathsf{F}}

\definecolor{cool_green}{rgb}{0.0, 0.5, 0.0}
\definecolor{cool_blue}{rgb}{0.0, 0.0, 0.5}


\newcommand{\Tr}[1]{\operatorname{Tr}\!\left[#1\right]}
\newcommand{\PTr}[2]{\operatorname{Tr}_{{#1}}\!\left\{#2\right\}}
\def\>{\rangle}
\def\<{\langle}
\def\N#1{\left|\!\left|{#1}\right|\!\right|}

\def\mE{\mathcal{E}}

\def\openone{\mathds{1}}

\renewcommand{\qedsymbol}{\nobreak \ifvmode \relax \else
	\ifdim \lastskip<1.5em \hskip-\lastskip \hskip1.5em plus0em
	minus0.5em \fi \nobreak \vrule height0.75em width0.5em
	depth0.25em\fi}

\renewcommand{\ge}{\geqslant}
\renewcommand{\le}{\leqslant}

\begin{document}

\title{Thermodynamic Reverse Bounds\\for General Open Quantum Processes}

\author{Francesco Buscemi}
\email{buscemi@i.nagoya-u.ac.jp}
\affiliation{Department of Mathematical Informatics, Graduate  School  of  Informatics,  Nagoya University, Chikusa-ku, 464-8601 Nagoya, Japan}
\author{Daichi Fujiwara}
\affiliation{Department of Mathematical Informatics, Graduate  School  of  Informatics,  Nagoya University, Chikusa-ku, 464-8601 Nagoya, Japan}
\author{Naoki Mitsui}
\affiliation{Department of Mathematical Informatics, Graduate  School  of  Informatics,  Nagoya University, Chikusa-ku, 464-8601 Nagoya, Japan}
\author{Marcello Rotondo\thanks{asdasd}}
\affiliation{Department of Physics, Graduate School of Science, Nagoya University, Chikusa-ku, Nagoya 464-8602, Japan (until March 2019)}
\date{\today}

\begin{abstract}
	
	Various quantum thermodynamic bounds are shown to stem from a single tighter and more general inequality, consequence of the operator concavity of the logarithmic function. Such an inequality, which we call the ``thermodynamic reverse bound'', is compactly expressed as a quantum relative entropy, from which it inherits mathematical properties and meaning. As concrete examples, we apply our bound to evaluate the thermodynamic length for open processes, the heat exchange in erasure processes, and the maximal energy outflow in general quantum evolutions.
	
\end{abstract}

\maketitle

\section{Introduction}

Umegaki's quantum extension~\cite{umegaki1961} of Kullback--Leibler's relative entropy~\cite{kullback1951} is of vital importance for a rigorous formalization of quantum thermodynamics. In particular, the fact that quantum relative entropy, notwithstanding the noncommutativity of the underlying algebra of operators, shares many properties with its classical counterpart---most notably the monotonicity under coarse-grainings~\cite{lindblad1974}---is enough by itself to serve as a powerful tool for investigation.

In this paper we show how a much weaker property than monotonicity, namely, the operator concavity of the logarithmic function~\cite{Carlen2010} can be invoked to improve on various thermodynamic bounds for quantum nonequilibrium processes, possibly driven and open. {Besides the mere quantitative improvements over previously known bounds, our formula possesses the unique advantage of being expressed as a quantum relative entropy, thus inheriting all the mathematical properties, the conceptual advantages, and the physical relevance of one among the most prolific quantities in statistical sciences~\cite{jaynes_2003,cover2012elements,amari_2016}.} 

The paper proceeds as follows. After presenting the general mathematical bound, we show its physical relevance by deriving: (1), a new formula for the thermodynamic length, recovering that for closed processes as a special case, but directly applicable to open processes; (2), a tighter bound for the heat exchange in erasure processes; and, (3), a stronger formulation of the quantum second law as presented in~\cite{Tasaki2015} as a generalization of~\cite{Lenard1978} to general quantum open processes.

\section{The thermodynamic reverse bound}

We consider a system $S$ interacting with an ancilla $A$ from time $t=0$ to time $t=\tau>0$. We assume that the system and the ancilla are initially in a factorized state $\rho_0^{SA}=\rho^S_0\otimes\rho^A_0$. On the other hand, we make no particular assumptions at this point on the microscopic model (i.e., the Hamiltonian operator) underlying the interaction. We only need to know that the joint system-ancilla evolution from $t=0$ to $t=\tau$ is described by some unitary operator $U^{SA}$, so that the reduced system's dynamics is described by a completely positive trace-preserving (CPTP) linear map $\Phi$ as follows~\cite{Breuer-Petruccione}:
\begin{align}
\rho^S_\tau&=\PTr{A}{U^{SA}\ (\rho^S_0\otimes\rho^A_0)\ (U^{SA})^\dag}\label{eq:micro-model}\\
&=:\Phi(\rho^S_0)\;.
\end{align}

In order to discuss the thermodynamics of system $S$, we also introduce the system's Hamiltonian operators at initial and final times, namely,  $H^S_{ \{0,\tau\} }$, and the corresponding states at equilibrium at some inverse temperature $\beta>0$, namely, $\Gamma^S_{\{0,\tau\}}=\exp[\beta(F_{\{0,\tau\}}-H^S_{ \{0,\tau\} })]$, where $F_{\{0,\tau\}}:=-\beta^{-1}\ln\Tr{\exp(-\beta H^S_{ \{0,\tau\}})}$ denote the system's free energies at thermal equilibrium. In this paper we focus in particular on the difference between the system's average internal energy at final and initial times, namely,
\begin{equation}
\Delta E:=\Tr{\rho^S_\tau\ H^S_\tau}-\Tr{\rho^S_0\ H^S_0}\;.
\end{equation}
The above quantity, though always well-defined, is most meaningful when any interaction terms can be neglected in the energy computation. This is the case, for example, if at times $t=0$ and $t=\tau$ the interaction terms in the total Hamiltonian are negligible with respect to the system's internal energy. As a concrete example, one can think of a system that ``flies'' through a relatively small interaction region, where energy can be exchanged with the ancilla, but at times $t=0$ and $t=\tau$ is sufficiently far from such a region so that interactions can be neglected.

Denoting by $\Phi^\dagger$ the trace-dual map associated with $\Phi$, that is, the map satisfying $\Tr{\Phi(X)\ Y}=\Tr{X\ \Phi^\dagger(Y)}$ for all operators $X,Y$, by standard manipulations it is possible to show that
\begin{align}
&\beta(\Delta E-\Delta F)+D(\rho^S_0\|\Gamma^S_0)\nonumber\\
&=D(\rho^S_0\|\Phi^\dag(\Gamma^S_\tau))+\Tr{\rho^S_0\ L}\label{eq:main-eq}\\
&\ge D(\rho^S_0\|\Phi^\dag(\Gamma^S_\tau))\label{eq:main-ineq}\;,
\end{align}
where:
\begin{enumerate}[(i)]
	\item $\Delta F$ denotes the difference of equilibrium free energies: $\Delta F:=F_\tau-F_0$;
	\item $D(X\|Y)$, for $X,Y\ge 0$, $X\neq 0$, denotes an extension of Umegaki's quantum relative entropy to positive semi-definite operators (see, e.g.,~\cite{MDSFT2013}), namely, $D(X\|Y):=\Tr{X}^{-1}\Tr{X\ln X-X\ln Y}$, if $\operatorname{supp}X \subseteq \operatorname{supp}Y$, or $+\infty$ otherwise;
	\item $L$ denotes the operator defined by $L:=\ln[\Phi^\dag(\Gamma^S_\tau)] -\Phi^\dag(\ln\Gamma^S_\tau)$.
\end{enumerate}
In this paper we refer to inequality~(\ref{eq:main-ineq}) as the \textit{thermodynamic reverse bound}, because the trace-dual map $\Phi^\dag$, which for isolated processes coincides with the reverse evolution, is used to ``pull'' the final thermal equilibrium state from $t=\tau$ back to $t=0$.
	
The thermodynamic reverse bound holds because the operator $L$ in~\eqref{eq:main-eq} is always positive semidefinite. This fact is a consequence of the operator concavity of the logarithmic function (a result sometimes referred to as the L\"owner--Heinz Theorem; see, e.g., Theorem~2.6 in~\cite{Carlen2010}) together with the so-called Schwarz inequality for positive linear maps~\cite{Davis1961,Choi1974}. We notice however that, due to the fact that the trace-dual map of a CPTP map is CP but not necessarily TP, the operator $\Phi^\dag(\Gamma^S_\tau)$ appearing in~(\ref{eq:main-eq}) and~(\ref{eq:main-ineq}) is positive semidefinite but its trace may differ from one. As a consequence, the term $D(\rho^S_0\|\Phi^\dag(\Gamma^S_\tau))$ may be smaller than zero. A special case of~(\ref{eq:main-ineq}) is discussed from an information-theoretic perspective in~\cite{Buscemi-Das-Wilde}.

The above relations, Eqs.~(\ref{eq:main-eq}) and~(\ref{eq:main-ineq}), are to be compared with the following ones, that again can be easily verified by direct inspection, that is,
\begin{align}
&\beta(\Delta E-\Delta F)+D(\rho^S_0\|\Gamma^S_0)\nonumber\\
&=D(\Phi(\rho^S_0)\|\Gamma^S_\tau)+\Delta S\label{eq:standard-eq}\\
&\ge \Delta S\label{eq:standard-ineq}\;,
\end{align}
where $\Delta S:=-\Tr{\rho^S_\tau\ln\rho^S_\tau}+\Tr{\rho^S_0\ln\rho^S_0}$ is the entropy increase of the system, which can be positive or negative, depending on the initial state and the CPTP map $\Phi$. Inequality~(\ref{eq:standard-ineq}) follows because both $\Phi(\rho^S_0)=\rho^S_\tau$ and $\Gamma^S_\tau$ are two normalized density matrices, so that their relative entropy is non-negative.

It is interesting to notice that both relations~(\ref{eq:main-eq}) and~(\ref{eq:standard-eq}) only involve terms which depend solely on the system's reduced dynamics, and that both relations are expressed as the sum of two quantities, one of which is always non-negative. However, only the thermodynamic reverse bound~(\ref{eq:main-ineq}) is expressed in the form of a quantum relative entropy, thus carrying the useful interpretation of information-theoretic divergence~\cite{amari_2016}. In what follows, we will discuss this fact in connection with some situations of physical interest.

\subsection{Conditions for equality}

Before proceeding with the applications, we first comment on the conditions that imply equality in~(\ref{eq:main-ineq}), for all initial states $\rho_0^S$. More precisely, we want to know what properties should the CPTP map $\Phi$ satisfy in order to have
\begin{align}\label{eq:cond-for-eq}
\ln[\Phi^\dag(\Gamma^S_\tau)] =\Phi^\dag(\ln\Gamma^S_\tau)\;.
\end{align}
Clearly, for unitary $\Phi$ equality is obtained. However, unitarity is not necessary: for example, any CPTP map of the form
\begin{align}\label{eq:counter-unital}
\Phi(\cdot^S)=\sum_{i=1}|f_i\>\<\varphi_i|^S(\cdot^S)|\varphi_i\>\<f_i|^S\;,
\end{align}
where $\{|f_i\>^S \}_i$ is the energy eigenbasis for $H^S_\tau$ and $\{|\varphi_i\>^S \}_i$ is an arbitrary orthonormal basis for $S$, also attains~(\ref{eq:cond-for-eq}). Maps like that in~(\ref{eq:counter-unital}), even though not unitary, still satisfy the relation $\Phi(\openone)=\openone$. For such maps, called unital, also their trace-dual $\Phi^\dag$ is CPTP. More freedom in constructing counterexamples is available whenever $H^S_\tau$ has some degenerate eigenvalues: to see this, it suffices to consider the case of a completely degenerate $H^S_\tau$ for which $\Gamma^S_\tau\propto\openone^S$, so that condition~(\ref{eq:cond-for-eq}) is automatically satisfied because trace-duals of CPTP maps always preserve the identity operator.

\section{Thermodynamic length}

One of the advantages of having a bound expressed in terms of a relative entropy is that it is possible to provide such a bound with a geometrical interpretation~\cite{amari_2016}.

Following~\cite{DeffnerLutz2010}, let us consider a closed system, initially in thermal equilibrium ($\rho^S_0=\Gamma^S_0$), driven by a thermodynamic protocol that brings the system's Hamiltonian from $H_0^S$ to $H_\tau^S$ by changing the values of some controllable macroscopic variables. We follow here the common semiclassical assumption of ``ideal pistons,'' according to which any back-reactions of the driven system on the thermodynamic drive can be neglected. Even in the presence of a time-varying Hamiltonian, the evolution from $t=0$ to $t=\tau$ of a closed system remains unitary, so that both $\Phi$ and $\Phi^\dag$ are unitary maps. This fact in particular implies that the operator $L$ in Eq.~(\ref{eq:main-eq}) is null. We thus obtain the following:
\begin{align}
\beta(\Delta E-\Delta F)&=D(\Gamma^S_0\|\Phi^\dag(\Gamma^S_\tau))\label{eq:closed-cyclic}\\
&=D(\Phi(\Gamma^S_0) \|\Gamma^S_\tau)\label{eq:deffner-lutz}\\
&\ge 0\;,\nonumber
\end{align}
where the second line follows again from the unitarity of $\Phi$. (Alternatively, it can be derived from~(\ref{eq:standard-eq}) using the fact that for a unitary transformation $\Delta S=0$.)

Relation~(\ref{eq:deffner-lutz}) has been first derived in~\cite{DeffnerLutz2010}, where it is interpreted as an expression of Clausius inequality for the irreversible entropy production in a nonequilibrium quantum process. Then, one of the main results of~\cite{DeffnerLutz2010} is to apply a bound of~\cite{AudenaertEisert2005} to show that
\begin{align}\label{eq:deffner-lutz-bures}
\beta(\Delta E-\Delta F)\ge \frac{8}{\pi^2}\mathcal{L}^2(\Phi(\Gamma^S_0),\Gamma^S_\tau)\;,
\end{align}
where $\mathcal{L}(\rho_1,\rho_2):=\arccos \fid(\rho_1,\rho_2)$, in turn defined in terms of the fidelity $\fid(\rho_1,\rho_2):=\N{\sqrt{\rho_1}\sqrt{\rho_2}}_1$, is a Riemannian metric called the \textit{Bures angle} or \textit{Bures length} between the normalized density matrices $\rho_1$ and $\rho_2$~\cite{bengtsson_zyczkowski_2006}. Relation~(\ref{eq:deffner-lutz-bures}) is interpreted in~\cite{DeffnerLutz2010} as a quantum generalization of the thermodynamic length~\cite{Crooks-thermo-length-2007}.

The thermodynamic reverse bound~(\ref{eq:main-ineq}) allows us to obtain a similar conclusion for open processes as well. We proceed as follows. First of all, we introduce a normalized density matrix $\gamma^S:=\frac{1}{\Tr{\Phi^\dag(\Gamma^S_\tau)}}\Phi^\dag(\Gamma^S_\tau)$, in terms of which we can write
\begin{align}
&D(\Gamma^S_0\|\Phi^\dag(\Gamma^S_\tau))\nonumber\\
&=D(\Gamma^S_0\|\gamma^S)-\ln\Tr{\Phi^\dag(\Gamma^S_\tau)}\;.\label{eq:ineq-renorm}
\end{align}
Then, plugging the above relation into~(\ref{eq:main-ineq}) and applying the same bound as in~\cite{DeffnerLutz2010}, we obtain
\begin{align}
&\beta(\Delta E-\Delta F)\nonumber\\
&\ge \frac{8}{\pi^2}\mathcal{L}^2(\Gamma^S_0,\gamma^S)-\ln\Tr{\Phi^\dag(\Gamma^S_\tau)}\;.
\label{eq:deffner-lutz-open}
\end{align}
The above relation is rigorously valid for any CPTP map $\Phi$ and reduces to relation~(\ref{eq:deffner-lutz-bures}) for unitary processes. In general, however, while the above bound always holds, relation~(\ref{eq:deffner-lutz-bures}) only holds for closed processes.

In order to illustrate this point further, let us consider the case in which the trace-dual map $\Phi^\dag$ preserves the trace of the final equilibrium state, that is, $\Tr{\Phi^\dag(\Gamma^S_\tau)}=1$. In this case, the extra term in relation~(\ref{eq:deffner-lutz-open}) disappears. Relation~(\ref{eq:deffner-lutz-open}) then assumes the same form of relation~(\ref{eq:deffner-lutz-bures}), which is however derived under the assumption that the process is closed. Indeed, in general, $\mathcal{L}^2(\Gamma^S_0,\gamma^S)\neq \mathcal{L}^2(\Phi(\Gamma^S_0),\Gamma^S_\tau)$, and while Eq.~(\ref{eq:deffner-lutz-open}) always holds, Eq.~(\ref{eq:deffner-lutz-bures}) may be violated in open processes. Hence, relation~(\ref{eq:deffner-lutz-open}) should be preferred over~(\ref{eq:deffner-lutz-bures}). In other words, the distance between the initial equilibrium state $\Gamma^S_0$ and the ``retrodicted'' state $\gamma^S$ appears to be a better indicator than the distance between the evolved initial equilibrium state and the final one.

An important family of open processes for which $\Tr{\Phi^\dag(\Gamma^S_\tau)}=1$ is the family of CPTP unital maps, namely, CPTP maps such that also their trace-dual $\Phi^\dag$ is TP. We thus see that a direct connection between the energy cost and the Bures length between $\Gamma^S_0$ and $\gamma^S$ exists not only in closed processes, but also in open unital processes. This fact can be seen as yet another instance of the general intuition that unital open processes, with respect to various thermodynamic aspects, behave ``almost as'' closed ones~\cite{Rastegin_2013,Albash2013}. Moreover, for unital $\Phi$, the trace-dual $\Phi^\dag$ can be interpreted as the backward process~\cite{Albash2013} and the state $\gamma^S:=\Phi^\dag(\Gamma^S_\tau)$ gains a direct physical interpretation.

We conclude this section by noticing that an alternative relation with the fidelity can be derived, which is tighter than~(\ref{eq:deffner-lutz-open}), even though it loses the character of being expressed in terms of a metric. Such relation, which is a direct consequence of a bound in~\cite{MDSFT2013} (see in particular Theorems~5 and~7 therein), reads as follows:
\begin{align}
&\beta(\Delta E-\Delta F)\nonumber\\
&\ge -2\ln \fid(\Gamma^S_0,\gamma^S)-\ln\Tr{\Phi^\dag(\Gamma^S_\tau)}\;.\label{eq:in-terms-of-fid}
\end{align}

\section{Erasure processes}

A paradigmatic example of open system evolution is provided by erasure processes. Following~\cite{GooldModiPater2015}, we define an erasure process as any process of the form
\begin{align}\label{eq:erasure}
\mE(\Gamma^A_0)=\PTr{S}{U(\rho^S_0\otimes\Gamma^A_0)U^\dag}\;,
\end{align}
where $U$ represents an arbitrary, though fixed, interaction mechanism between the system (whose state is being ``erased'') and the ancilla, which plays here the role of the environment. The resulting CPTP map $\mE$ is a map that acts on the ancilla only, its action depending on the system's initial state. For the sake of concreteness, in Eq.~(\ref{eq:erasure}) the ancilla is assumed to be initialized in the equilibrium state $\Gamma^A_0$, and the ancilla's Hamiltonian $H^A$ is assumed to remain constant during the process, so that $\Gamma^A_0=\Gamma^A_\tau=:\Gamma^A$. In this situation, the change in the ancilla's internal energy $\Delta E=\Tr{\mE(\Gamma^A)\ H^A}-\Tr{\Gamma^A\ H^A}$ is interpreted as the average heat $\<Q\>$ that flows from the system into the environment during the erasure process.

In~\cite{GooldModiPater2015} the following bound on the average heat flow is derived:
\begin{align}\label{eq:erasure-bound-GMP}
\beta\<Q\>\ge-\ln\Tr{\mE^\dag(\Gamma^A)}\;.
\end{align}
The quantity appearing in the right-hand side of the above inequality coincides exactly with the extra term we obtained by renormalizing the relative entropy in~(\ref{eq:ineq-renorm}). Therefore, we know that inequality~(\ref{eq:erasure-bound-GMP}) can be made tighter by adding the term $D(\Gamma^A\|\gamma^A)$, which measure the entropic divergence between the equilibrium state $\Gamma^A$ and the renormalized retrodicted state $\gamma^A:=\frac{1}{\Tr{\mE^\dag(\Gamma^A)}}\mE^\dag(\Gamma^A)$. We thus obtain
\begin{align}\label{eq:erasure-bound-ours}
\beta\<Q\>&\ge D(\Gamma^A\|\gamma^A)-\ln\Tr{\mE^\dag(\Gamma^A)}\;,
\end{align}
which, in comparison with the analogous bound in~\cite{GooldModiPater2015}, has two advantages: it is stronger, because $D(\Gamma^A\|\gamma^A)\ge 0$, and it possesses a geometric interpretation in terms of the thermodynamic length, as discussed above.

\subsection{Comparison}

From the properties of the quantum relative entropy, relation~(\ref{eq:erasure-bound-ours}) above becomes \textit{strictly} stronger than~(\ref{eq:erasure-bound-GMP}), as long as $D(\Gamma^A\|\gamma^A)>0$, which is the case whenever $\gamma^A\neq\Gamma^A$, that is, $\mE^\dag(\Gamma^A)\centernot{\propto}\Gamma^A$. On the other hand, the extra term $D(\Gamma^A\|\gamma^A)$ cannot become infinite, since $\mE^\dag(\Gamma^A)$ is invertible whenever $\Gamma^A$ is. This is a consequence of the fact that, for any CPTP map $\Phi$, its trace-dual map $\Phi^\dag$ is spectrum-width decreasing (see, e.g., Lemma~3.1 in~\cite{cleanPOVMs}).

The gap between~(\ref{eq:erasure-bound-ours}) and~(\ref{eq:erasure-bound-GMP}), though not infinite, can however be arbitrarily large, as the following example shows. Let us consider the case in which the erasure map has the form $\mE(\rho^A)=\sigma^A$, independently of the initial state $\rho^A$ of the ancilla. This corresponds to the situation in which erasure of the system is achieved by simply ``dumping'' it into the environment. In this case, the trace-dual map is easily computed as
\[
\mE^\dag(\cdot)=\openone^A\Tr{\cdot\ \sigma^A}\;.
\]
Then, by direct inspection we find that, in this particular case,
\[
D(\Gamma^A\|\gamma^A)=\ln d_A-S(\Gamma^A),
\]
being $d_A$ the dimension of the ancilla. Hence, by decreasing the temperature of the ancilla, the term $D(\Gamma^A\|\gamma^A)$ can go up to $\ln d_A$ (assuming nondegenerate $H^A$). In Figure~1 below we report some plots in which we numerically compare Eqs.~\eqref{eq:erasure-bound-GMP} and~\eqref{eq:erasure-bound-ours} in the case of a spin-1/2 system coupled to an interacting spin chain at finite temperature.

\section{Maximal energy outflow}

Until now we have used relation~(\ref{eq:main-ineq}) in order to bound from below the net amount of {energy that flowed} into the system during the process modeled by the CPTP map $\Phi$. Of course it can also be used to provide upper bounds on the maximum amount $\Delta_{\operatorname{drop}}=E_0-E_\tau$ by which the system's internal energy can \textit{drop} as a consequence of $\Phi$. Again assuming for simplicity that $H^S_0=H^S_\tau=:H^S$ and $\rho^S_0=\Gamma^S_0=:\Gamma^S$, the thermodynamic reverse bound~(\ref{eq:main-ineq}) together with~(\ref{eq:ineq-renorm}) immediately provide an upper bound to $\Delta_{\operatorname{drop}}$ as 
\begin{align}
\Delta_{\operatorname{drop}}&\le-\beta^{-1}D(\Gamma^S\|\Phi^\dag(\Gamma^S))\nonumber\\
&=\beta^{-1}\{\ln\Tr{\Phi^\dag(\Gamma^S)}-D(\Gamma^S\|\gamma^S)\}\nonumber\\
&\le\beta^{-1}\ln\Tr{\Phi^\dag(\Gamma^S)}\;.\label{eq:our-drop}
\end{align}
If the CPTP map $\Phi$ is obtained from an interaction model as in Eq.~(\ref{eq:micro-model}), it is possible to show, as done in Appendix~\ref{app:A}, that, for any state $\sigma$, $\Tr{\Phi^\dag(\sigma)}\le d_A$. In this way we recover Theorem~1 of~\cite{Tasaki2015}, that is
\begin{align}\label{eq:tasaki}
\Delta_{\operatorname{drop}}\le\beta^{-1}\ln d_A\;.
\end{align}
The above relation was interpreted in~\cite{Tasaki2015} as an extension of Lenard's result~\cite{Lenard1978} to general CPTP processes. Notice however that, while relation~(\ref{eq:tasaki}) depends on the particular microscopic model chosen as the realization of the process $\Phi$, our bound, Eq.~(\ref{eq:our-drop}), only depends on the system's reduced dynamics $\Phi$.

\section{Conclusions}

In this paper we discussed what we named as thermodynamic reverse bound, a bound for general open quantum processes in terms of a trace-dual map and quantum relative entropy. We showed that this bound improves some known thermodynamic inequalities. This bound invites further research, not only because of its usability to improve thermodynamic inequalities but also for its relevance to the study of reverse processes.

We conclude this paper noticing that the thermodynamic reverse bound~(\ref{eq:main-ineq}) and its applications---that is, inequalities~(\ref{eq:deffner-lutz-open}), (\ref{eq:in-terms-of-fid}), (\ref{eq:erasure-bound-ours}), and~(\ref{eq:our-drop})---also hold under the weaker assumption that the map $\Phi$ is trace-preserving and positive, not necessarily \textit{completely} positive~\cite{Choi1974}. In such a case, however, the map $\Phi$ does not admit in general a microscopic model such as in Eq.~(\ref{eq:micro-model}) and its interpretation remains unclear~\cite{Pechukas-1,Alicki-comment,Pechukas-replies,Buscemi-CP}. 

\begin{acknowledgments}
This work was supported by MEXT Quantum Leap Flagship Program (MEXT Q-LEAP), Grant Number JPMXS0120319794, and by the Japan Society for the Promotion of Science (JSPS) KAKENHI, Grants Number 19H04066 and Number 20K03746.
\end{acknowledgments}

\onecolumngrid

\begin{figure}\label{fig:grafici}
	\centering
	\subfigure[]{\includegraphics[width=0.32\linewidth]{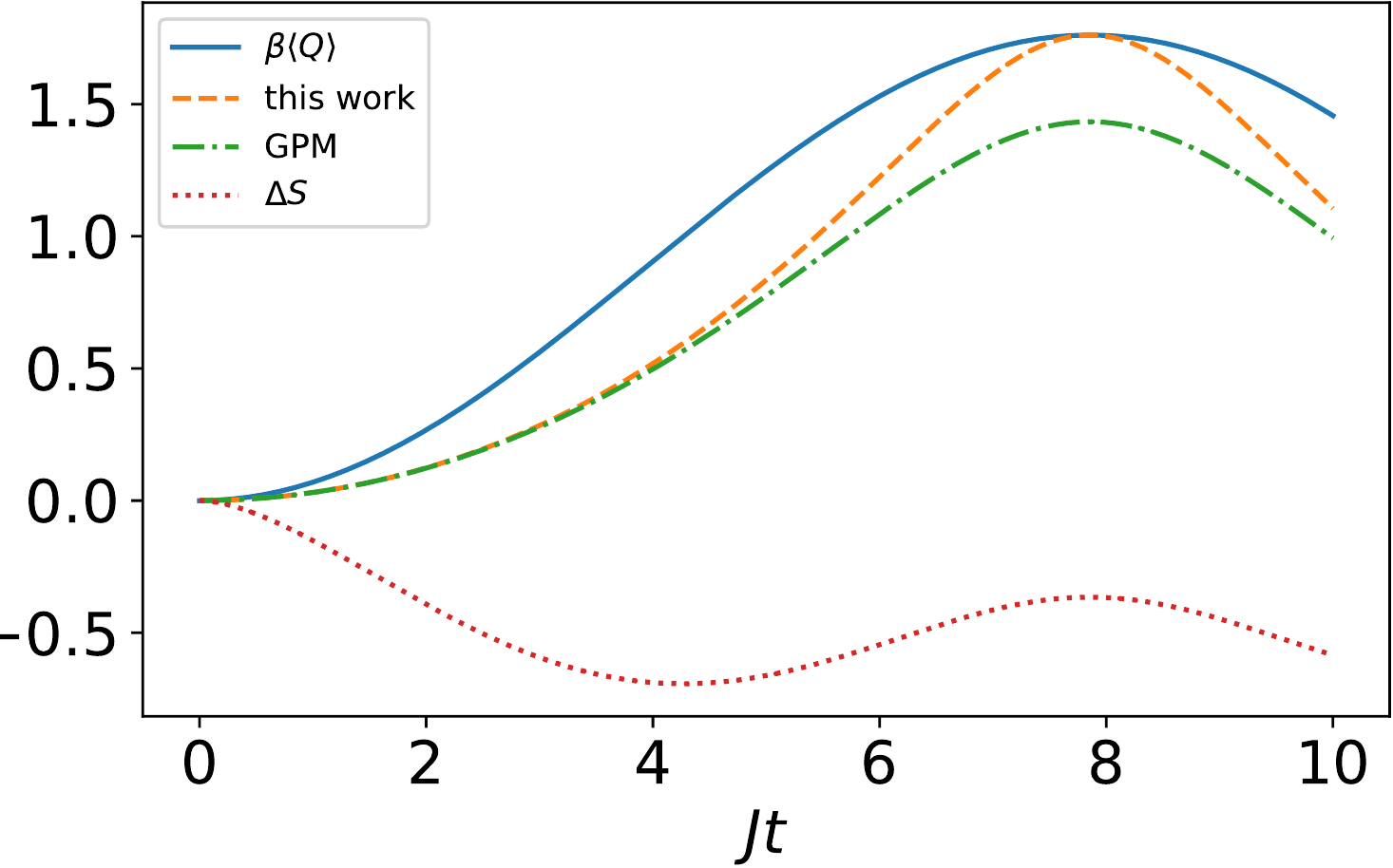}}
	\subfigure[]{\includegraphics[width=0.32\linewidth]{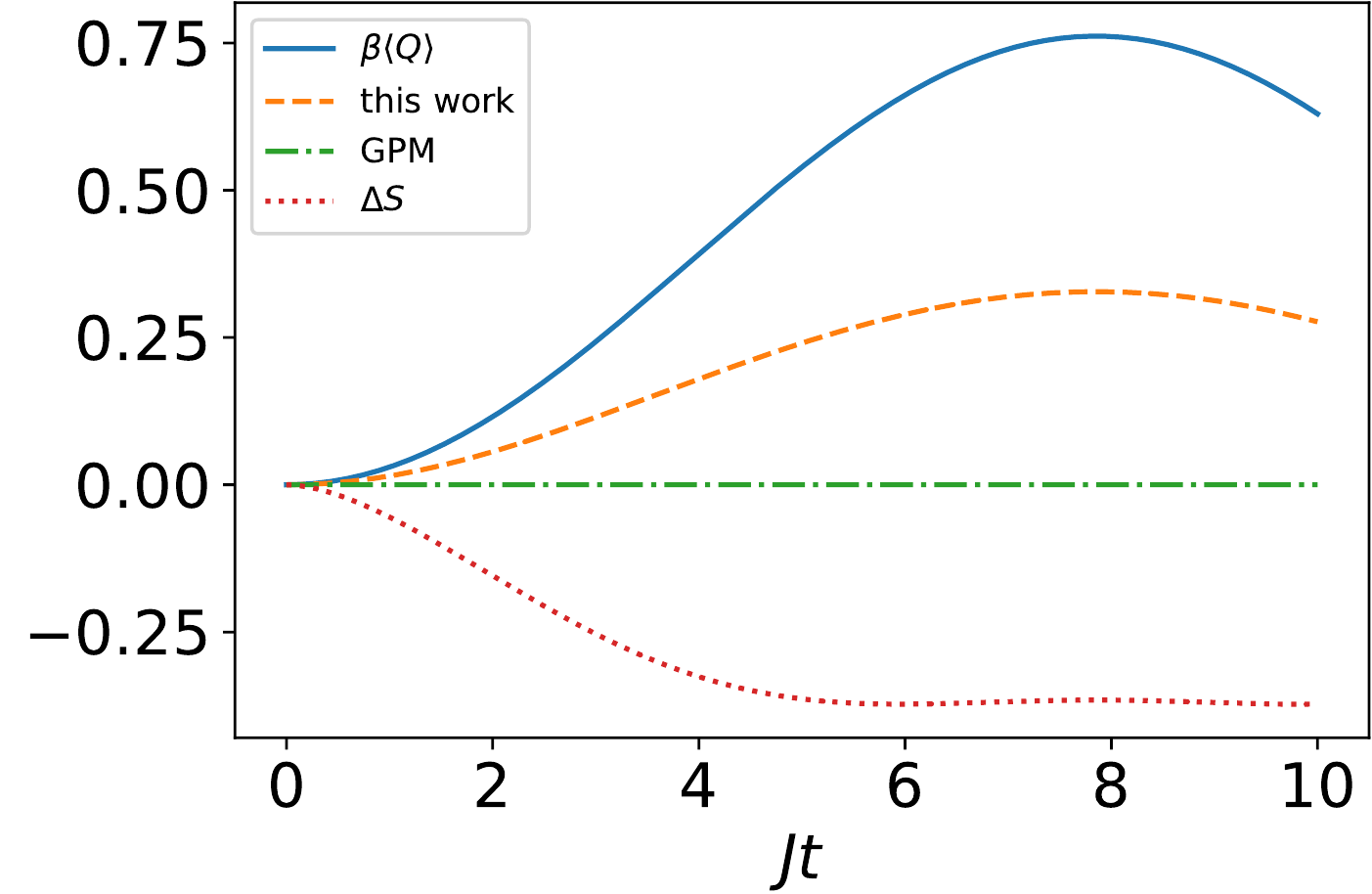}}	
	\subfigure[]{\includegraphics[width=0.32\linewidth]{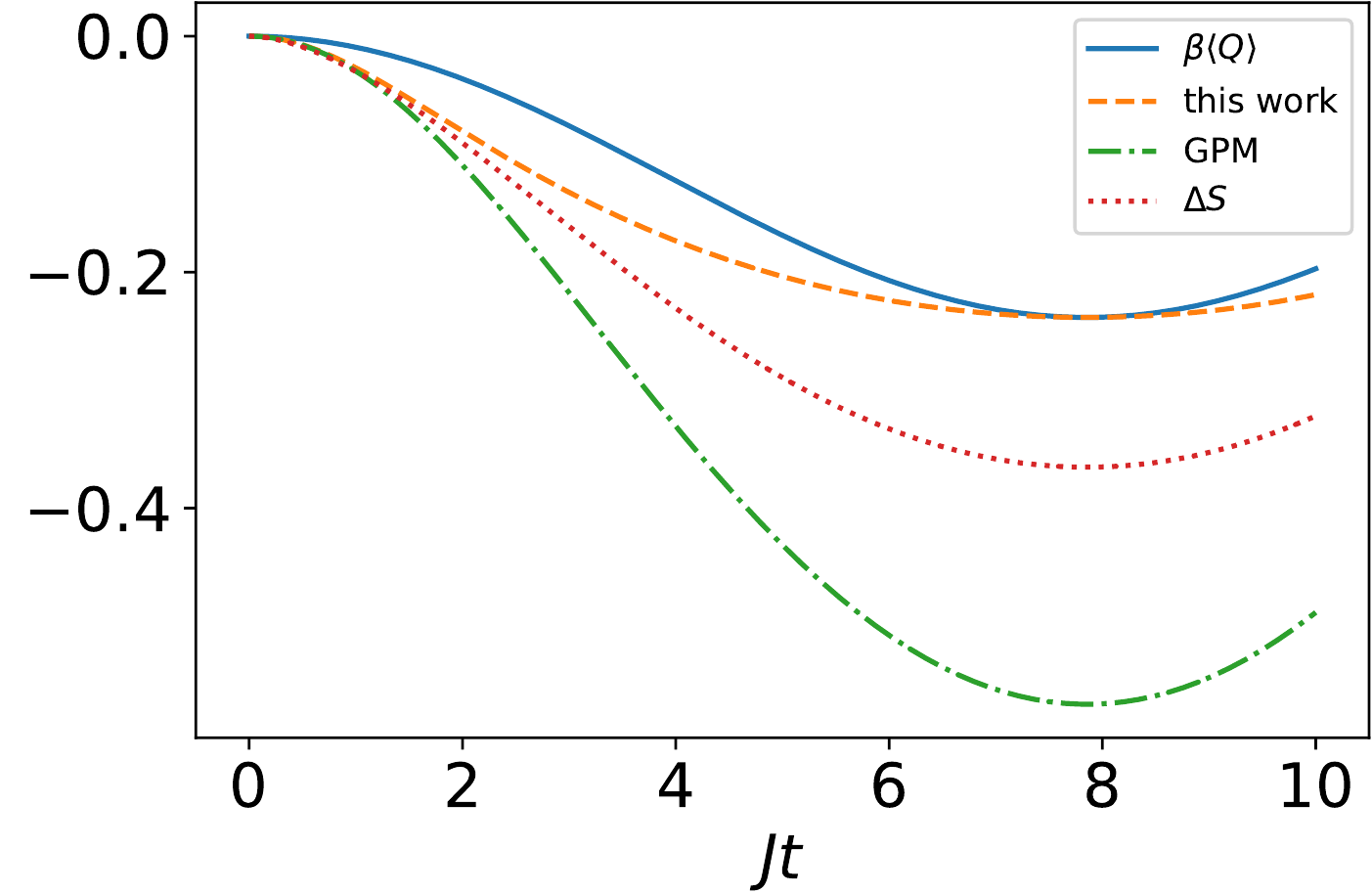}}
	\subfigure[]{\includegraphics[width=0.32\linewidth]{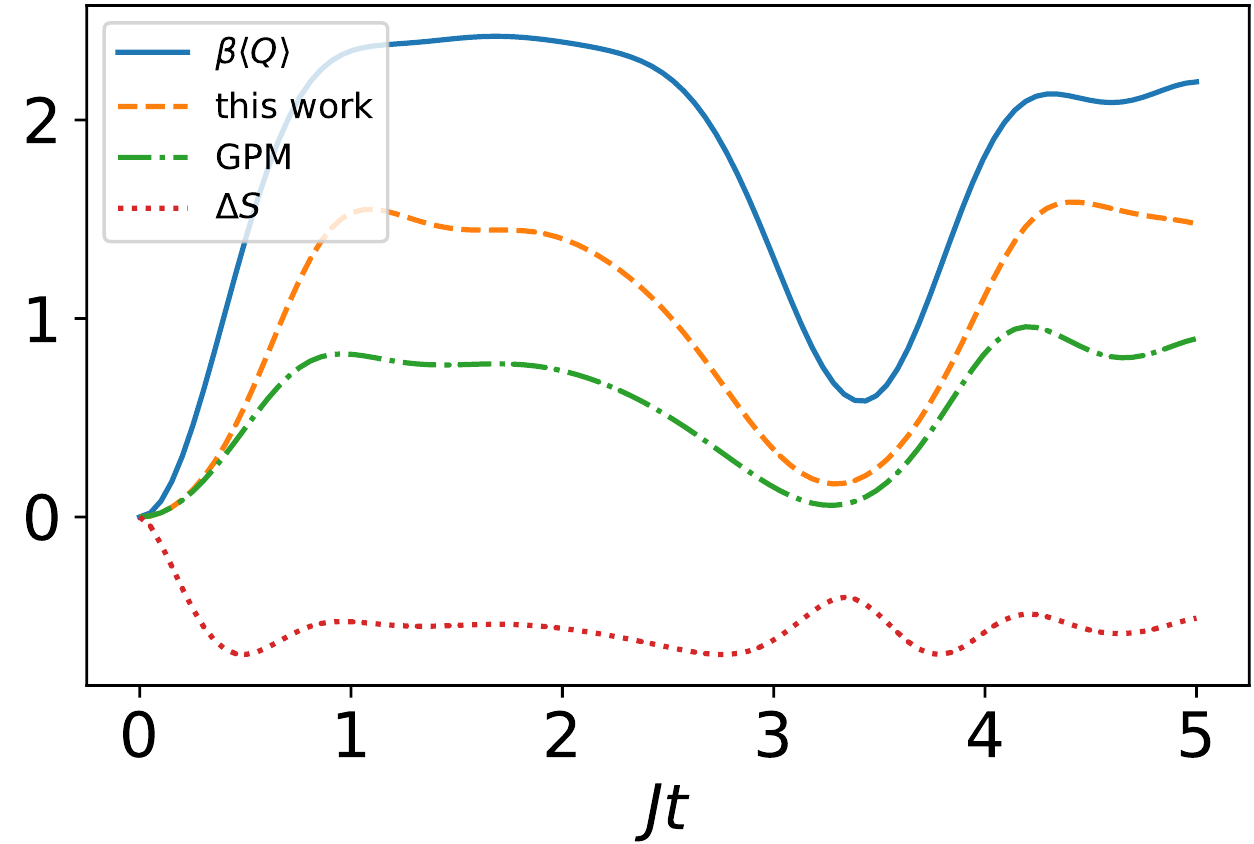}}
	\subfigure[]{\includegraphics[width=0.32\linewidth]{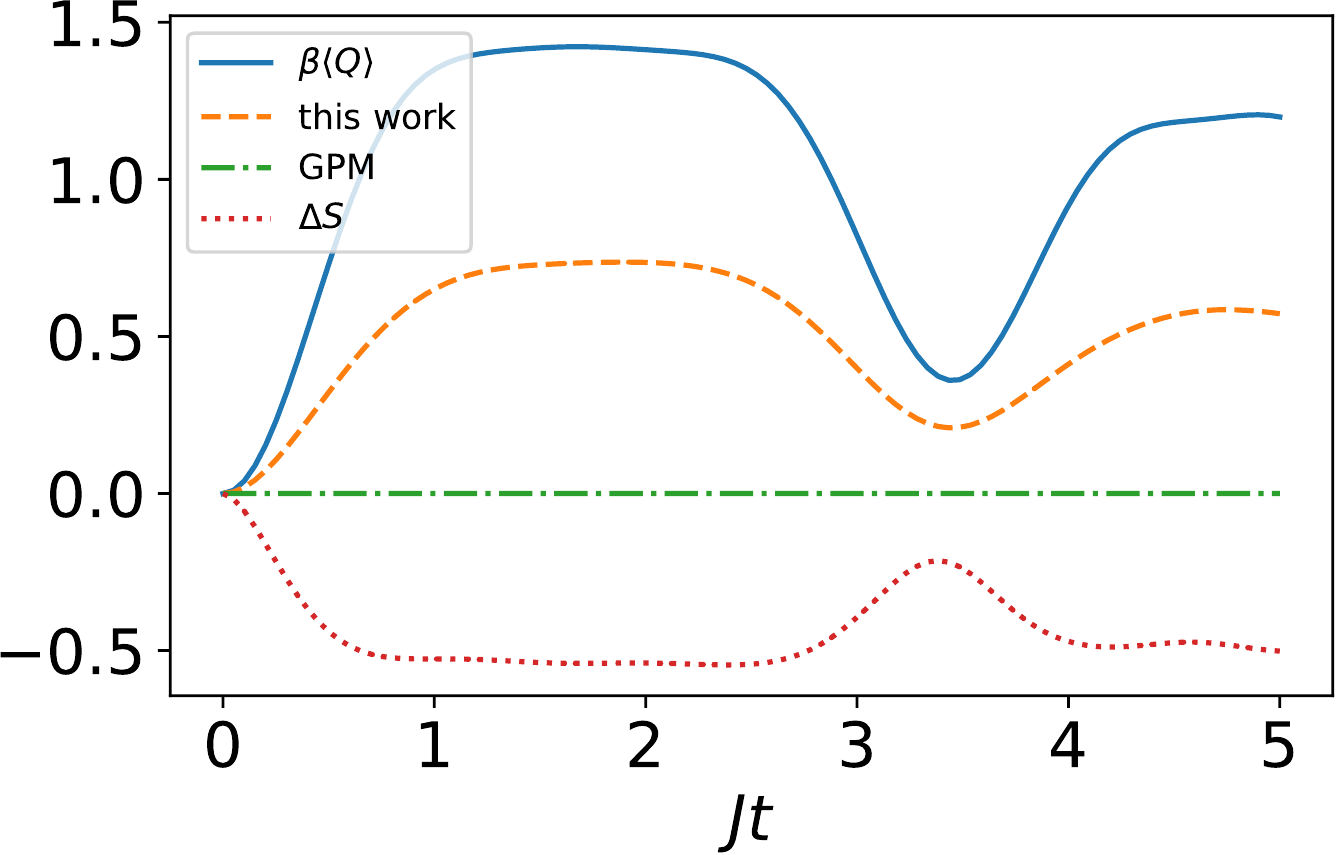}}
	\subfigure[]{\includegraphics[width=0.32\linewidth]{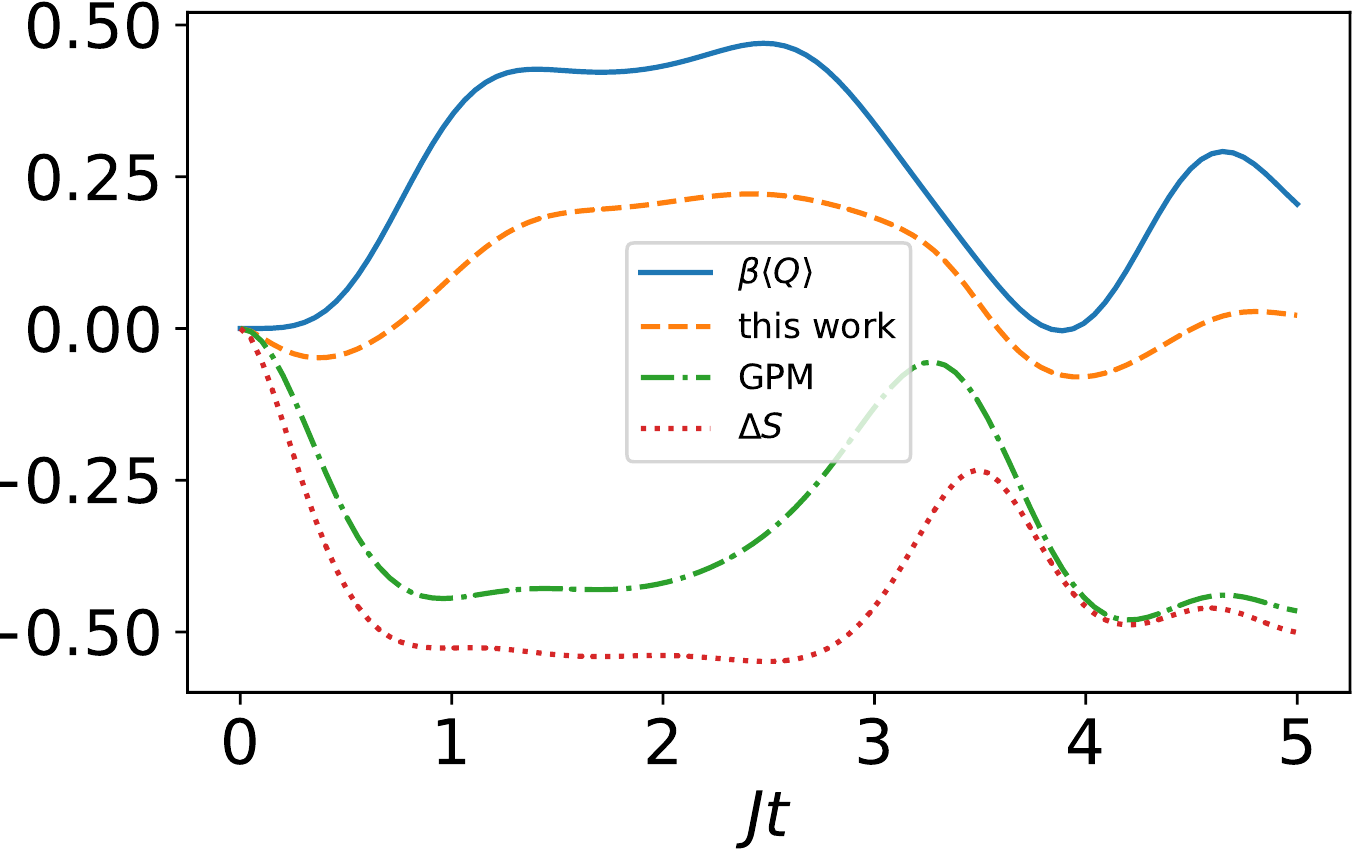}}
	\caption{The six plots above show the explicit values of the quantities appearing in Eq.~\eqref{eq:erasure-bound-ours}, in the case of a spin-1/2 system interacting with a spin-1/2 ancilla (panels (a), (b), and (c)), and in the case of a spin-1/2 system interacting with a four spin-1/2-particle ancilla (panels (d), (e), and (f)). In all panels the continuous curve corresponds to $\beta\langle Q\rangle$, the dashed curve corresponds to the right-hand side of~(\ref{eq:erasure-bound-ours}), the dashdotted curve corresponds to the right-hand side of~(\ref{eq:erasure-bound-GMP}), the dotted curve corresponds to $\Delta S$ (expressed in nats). \textbf{Interaction model.} In order to facilitate the comparison with Ref.~\cite{GooldModiPater2015}, we use exactly the same microscopic model: the Hamiltonian of the ancilla is assumed to be $H^A=J\sum_{j=1}^{N-1}(\sigma_x^j\sigma_x^{j+1}+\sigma_y^j\sigma_y^{j+1})+B\sum_{j=1}^{N}\sigma_z^j$, where $N$ is the number of spin-1/2 particles constituting the ancillary degrees of freedom, the $\sigma$'s denote the Pauli matrices, $J$ is the interspin coupling strength, and $B$ is the coupling of each ancillary spin with a homogeneous external magnetic field. The spin-1/2 system is coupled to the ancilla only through the first ancillary spin as follows: $H^{SA}=J_0(\sigma_x^S\sigma_x^{1}+\sigma_y^S\sigma_y^{1})$. Finally, the system's Hamiltonian is $H^S=B\sigma^S_z$. Following~\cite{GooldModiPater2015}, for the numerics we put $J_0=0.1$ in panels (a--c), $J_0=1$ in panels (d--f), while $B=J=\beta=1$ everywhere. Both axes report adimensional quantities, and the scale on the $x$-axis measures $Jt$, where $t$ is time. \textbf{Initial conditions.} The ancilla is assumed to always start in its thermal equilibrium state. The system instead is assumed to begin in a pure state $\alpha|\uparrow_z\>+\sqrt{1-\alpha^2}|\downarrow_z\>$. In particular: in plots (a) and (d) we set $\alpha=1$; in plots (b) and (e), $\alpha=1/\sqrt{2}$; in plots (c) and (f), $\alpha=0$. Hence, panels (a) and (d), respectively, reproduce Fig.~1(b) and Fig.2 of~\cite{GooldModiPater2015}, respectively. \textbf{Comparison.} From the plots we notice that our bound (the dashed curve) is always the closest one to the curve corresponding to the actual heat exchange (continuous curve), even when the other known bounds performs poorly as in panels (b), (c), (d), and (e).}
\end{figure}

\twocolumngrid

\bibliographystyle{alphaurl}
\bibliography{passive-channels-bib}

\appendix

\section{Appendix: derivation of bound~(\ref{eq:tasaki}) in the main text}\label{app:A}

\begin{lemma}
	Consider a CPTP map $\Phi$ defined by the equation
	\begin{equation}\label{eq:gen-CPTP}
	\Phi(\cdot^S):=\PTr{\mathcal{A}}{U^{SA}(\cdot^S\otimes\rho^A)(U^{SA})^\dag}\;,
	\end{equation}
	and denote by $\Phi^\dag$ the trace-dual of $\mE$, that is, the map satisfying the relation $\Tr{\Phi(X)\ Y}=\Tr{X\ \Phi^\dag(Y)}$, for all operators $X,Y$. Then, for any normalized density matrix $\sigma$,
	\begin{equation}
	\Tr{\Phi^\dag(\sigma)}\le d_A\;.
	\end{equation}
\end{lemma}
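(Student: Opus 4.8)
The plan is to first write the trace-dual map $\Phi^\dag$ in an explicit, Stinespring-like form, and then bound its trace directly using the unitarity of $U^{SA}$ together with the elementary operator inequality $\rho^A\le\openone^A$. The whole argument is essentially a careful exercise in reordering partial traces.

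First I would extract $\Phi^\dag$ from its defining relation $\Tr{\Phi(X)\,Y}=\Tr{X\,\Phi^\dag(Y)}$. Writing the partial trace over $A$ as a full trace against $Y\otimes\openone^A$, one has $\Tr{\Phi(X)\,Y}=\Tr{U^{SA}(X\otimes\rho^A)(U^{SA})^\dag\,(Y\otimes\openone^A)}$; cycling the unitary and collecting the factors carrying $\rho^A$ and the trace over $A$ then gives
\begin{equation}
\Phi^\dag(Y)=\PTr{A}{(\openone^S\otimes\rho^A)\,(U^{SA})^\dag\,(Y\otimes\openone^A)\,U^{SA}}\;.
\end{equation}
Verifying that this operator on $S$ indeed satisfies the duality relation is routine: one reinserts it into $\Tr{X\,\Phi^\dag(Y)}$, promotes $X$ to $X\otimes\openone^A$, and reads off $\Tr{\Phi(X)\,Y}$ by cyclicity.

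Next, I would set $Y=\sigma$ and take the full trace. Since $\Tr{\PTr{A}{\cdot}}=\Tr{\cdot}$, the partial trace is absorbed and
\begin{equation}
\Tr{\Phi^\dag(\sigma)}=\Tr{(\openone^S\otimes\rho^A)\,(U^{SA})^\dag\,(\sigma\otimes\openone^A)\,U^{SA}}\;.
\end{equation}
Because $\rho^A$ is a normalized density matrix its eigenvalues lie in $[0,1]$, so $\openone^S\otimes\rho^A\le\openone^{SA}$. The operator $(U^{SA})^\dag(\sigma\otimes\openone^A)U^{SA}$ is positive semidefinite, hence the functional $C\mapsto\Tr{C\,(U^{SA})^\dag(\sigma\otimes\openone^A)U^{SA}}$ is monotone under the Loewner order, and replacing $\openone^S\otimes\rho^A$ by $\openone^{SA}$ can only increase the trace. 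Unitarity and $\Tr{\sigma}=1$ then yield $\Tr{(U^{SA})^\dag(\sigma\otimes\openone^A)U^{SA}}=\Tr{\sigma\otimes\openone^A}=d_A$, which is exactly the claimed bound. (Equivalently, one could invoke $\Tr{PQ}\le\N{P}_\infty\,\Tr{Q}$ for $P,Q\ge0$ with $P=\openone^S\otimes\rho^A$, noting $\N{\rho^A}_\infty\le1$.)

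I expect the only genuine obstacle to be the bookkeeping in the first step: getting the order of the factors and the placement of $\rho^A$ versus $\openone^A$ correct when converting the partial trace into the adjoint map, since a careless transcription easily swaps $\sigma\otimes\openone^A$ with $\openone^S\otimes\rho^A$ or mislocates the dagger on $U^{SA}$. Once the explicit form of $\Phi^\dag$ is pinned down, the bound itself is immediate from $\rho^A\le\openone^A$ and unitarity, with no subtlety remaining.
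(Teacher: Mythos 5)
Your proposal is correct, and it takes a genuinely different---and in fact more streamlined---route than the paper's. The paper proves the lemma by first assuming the ancilla state is pure, so that $\Phi$ admits exactly $N=d_A$ Kraus operators $K_i=(\openone^S\otimes\<\varphi_i|^A)\,U^{SA}\,(\openone^S\otimes|\psi\>^A)$, and then estimating $\Tr{\Phi^\dag(\sigma)}=\Tr{\sum_i K_iK_i^\dag\,\sigma}$ through a chain of norm inequalities: $\Tr{X\sigma}\le\N{X}_\infty$, the triangle inequality for the operator norm, and the observation that $K_iK_i^\dag$ and $K_i^\dag K_i$ are isospectral while $\sum_i K_i^\dag K_i=\openone$ forces $\N{K_iK_i^\dag}_\infty\le 1$; a separate convexity argument, decomposing a mixed $\rho^A$ into pure components and averaging, then completes the general case. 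You instead compute the trace-dual in closed form, $\Phi^\dag(Y)=\PTr{A}{(\openone^S\otimes\rho^A)\,(U^{SA})^\dag\,(Y\otimes\openone^A)\,U^{SA}}$, which is correct as stated, and reduce the entire bound to the single Loewner-order step $\openone^S\otimes\rho^A\le\openone^{SA}$ traced against the positive operator $(U^{SA})^\dag(\sigma\otimes\openone^A)\,U^{SA}$, followed by unitarity and $\Tr{\sigma\otimes\openone^A}=d_A$. This buys three things at once: no Kraus decomposition, no triangle inequality, and a uniform treatment of pure and mixed ancilla states with no case split. Moreover, your parenthetical variant $\Tr{PQ}\le\N{P}_\infty\Tr{Q}$ for $P,Q\ge0$ actually yields the sharper statement $\Tr{\Phi^\dag(\sigma)}\le d_A\,\N{\rho^A}_\infty$, which reproduces $d_A$ for pure ancillas and correctly collapses to $1$ when $\rho^A$ is maximally mixed (the unital case, where $\Phi^\dag$ is exactly trace-preserving)---a refinement the paper's counting argument does not expose. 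What the paper's route makes visible in exchange is the structural fact that $\Tr{\Phi^\dag(\sigma)}$ is bounded by the number of Kraus operators in any Kraus representation of a CPTP map, a statement independent of the particular unitary dilation.
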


\begin{proof}
	Let us fix an arbitrary orthonormal basis in the appartus' Hilbert space, let us say, $\{|\varphi_i\>^A \}$, and let us begin by assuming that the apparatus' initial state is pure, let us say, $\rho^A=|\psi\>\<\psi|^A$. In this case, the CPTP map $\Phi$ can be written as
	\[
	\Phi(\cdot)=\sum_{i=1}^NK_i\cdot K_i^\dag\;,
	\]
	where the $N=d_A$ Kraus operators $K_i$ are defined by the relation
	\[
	K_i:=(\openone^S\otimes\<\varphi_i|^A)\ U^{SA}\ (\openone^S\otimes|\psi\>^A)\;.
	\]
	Then,
	\begin{align}
	\Tr{\Phi^\dag(\sigma)}&=\Tr{\sum_{i=1}^N K_iK_i^\dag\ \sigma}\\
	&\le \N{\sum_{i=1}^N K_iK_i^\dag}_\infty\label{eq:lmax}\\
	&\le \sum_{i=1}^N\N{K_iK_i^\dag}_\infty\label{eq:triangle}\\
	&\le\sum_{i=1}^N 1\label{eq:polar}\\
	&=N=d_A\;,
	\end{align}
	where
	\begin{itemize}
		\item in~(\ref{eq:lmax}) we used the inequality $\Tr{X\sigma}\le\lambda_{\max}(X)=:\N{X}_\infty$, valid for all self-adjoint operators $X$ and all normalized density matrices $\sigma$;
		\item in~(\ref{eq:triangle}) we used the triangle inequality for the Schatten infinity norm
		\item in~(\ref{eq:polar}) we used the fact that, for all $i$, $K_iK_i^\dag$ and $K_i^\dag K_i$ have the same eigenvalues (a consequence of their being positive semidefinite and of the polar decomposition), jointly with the fact that $\sum_iK_i^\dag K_i=\openone$, so that $K_iK_i^\dag\le\openone$ for all $i$.
	\end{itemize}
	
	Let us now consider the general case of a mixed initial apparatus's state $\rho^A$. In this case, the density matrix can be expanded on its pure components $\rho^A=\sum_jp(j)|\chi_j\>\<\chi_j|^A$, where $p(j)$ is a probability distribution. By plugging this into Eq.~(\ref{eq:gen-CPTP}) and using the linearity of the partial trace, we see that $\Phi(\cdot)=\sum_jp(j)\Phi_j(\cdot)$, where the $\Phi_j$'s are all CPTP maps as the original $\Phi$, but realized by means of a \textit{pure} apparatus state. Therefore, by following the same lines of reasoning as before, each $\Phi_j$ can be written using only $N=d_A$ Kraus operators. Thus,
	\begin{align}
	\Tr{\Phi^\dag(\sigma)}&=\Tr{\left(\sum_{j}p(j) \Phi_j^\dag\right)(\sigma)}\\
	&=\sum_jp(j)\Tr{\Phi^\dag_j(\sigma)}\\
	&\le \sum_jp(j)\sum_{i=1}^N 1\\
	&=N=d_A\;.
	\end{align}
\end{proof}

\end{document}